\newtheorem{theorem}{Theorem}
\newtheorem{lemma}{Lemma}
\DeclareMathOperator{\diag}{diag}
\begin{document}
\title{\bf\large\MakeUppercase{Exponential convergence rate\\for Iterative Markovian Fitting}}
\author{Sokolov K.O.$^{1,2,3}$, Korotin A.A.$^{1,4,}$\footnote{\textbf{Acknowledgement.} The work was supported by the grant for research centers in the field of AI provided by the Ministry of Economic Development of the Russian Federation in accordance with the agreement 000000C313925P4F0002 and the agreement with Skoltech №139-10-2025-033. \textbf{1.} Skolkovo Institute of Science and Technology. \textbf{2.} “Vega” Institute. \textbf{3.} Lomonosov Moscow State University. \textbf{4.} Artificial Intelligence Research Institute.}}%
\date{}

\maketitle

\vspace{-7mm}We study the problem of constructing the Schr\"odinger bridge \cite[\wasyparagraph 2]{CSBM} in discrete time ${0\!=\!t_0\!<\!t_1\!<\!\dots t_N\!<\!t_{N+1}\!=\!1}$ on the set $\mathcal{X}\!=\!\{1,2,\dots,|\mathcal{X}|\}$. Two distributions $\mu,\nu\in\mathcal{P}(\mathcal{X})$ with everywhere positive density are given. Let $q\in \mathcal{P}(\mathcal{X}^{N+2})$ denote a distribution on $\mathcal{X}^{N+2}$. It may be associated with an $\mathcal{X}$-valued process over times $\{t_0,\dots,t_{N+1}\}$. We abbreviate $x_{t_{n}:t_{n+j}}:=(x_{t_n},\dots,x_{t_{n+j}})$ for consecutive indices. We denote the marginal density (probability) of a trajectory segment by $q(x_{t_{n}:t_{n+j}})$, and use $q(x_{t_i}|x_{t_j})$ for conditional densities. For a \text{Markov} process $q(x_{0:1})\!=\!\Pi_{n=0}^N q(x_{t_{n+1}}|x_{t_n})q(x_0)\!>\!0$ one seeks \vspace{-1.5mm} 
\begin{equation}\label{Schrodinger}
    \min_{p\in\Pi_{N+2}(\mu,\nu)} KL(p\|q),
\end{equation}

\vspace{-3mm}\noindent where $KL(p\|q):=\sum_{\omega\in\Omega}\ln\left(\frac{p(\omega)}{q(\omega)}\right)p(\omega)$ is the Kullback–Leibler divergence, $\Pi_{N+2}(\mu,\nu)\subset \mathcal{P}(\mathcal{X}^{N+2})$ is the set of processes with fixed marginals $\mu$ and $\nu$ at the initial and final times ($t_0=0$, $t_{N+1}=1$). The minimum $p^{*}$ exists and is unique.

Recently the IMF algorithm \cite{DSBM} was proposed to solve problem (\ref{Schrodinger}), which consists of successive transformations interpreted as projections onto the sets of \textbf{Markov} and $q$-\textbf{reciprocal} processes (see \cite[\wasyparagraph 2.5]{CSBM}): 
\begin{equation}
    p_{k+\frac{1}{2}}=\Pi_{n=0}^N p_{k}(x_{t_{n+1}}|x_{t_n})p_{k}(x_0), \quad 
    p_{k+1}= q(x_{t_1:t_N}|x_0, x_1)p_{k+\frac{1}{2}}(x_0,x_1),
    \label{markov-reciprocal-proj}
\end{equation}
where $p_0\in \Pi_{N+2}(\mu,\nu)$ takes the form $p_0:=q(x_{t_1:t_N}|x_0,x_1)\eta(x_0,x_1)$ with $\eta\in \Pi(\mu,\nu)$.

Although it is known that $KL(p_k\|p^*)\rightarrow 0$ as $k\rightarrow\infty$ \cite[Corollary 3.2]{CSBM}, the rate of this convergence remains an open question. Here we \textbf{for the first time prove exponential convergence of IMF}. We rely on convergence analysis of iterations \cite{ConvBlockDescent} minimizing a strongly convex function with a Lipschitz gradient. Denote $\varepsilon_q\!:=\!\min_{ \mathcal{X}^{N+2}} q(x_{t_{1}:t_{N}}|x_0, x_1)$, $\varepsilon_\mu\!:=\!\min_{x_0\mathcal{\in X}}\mu(x_0)$, $\varepsilon_\nu\!:=\!\min_{x_1\mathcal{\in X}}\nu(x_1)$ and $m:=\varepsilon_q^{N+2}\varepsilon_\mu\varepsilon_\nu$.
\vspace{-3.5mm}\begin{theorem}  
For $k=1,2,\dots,$ it holds that 
$
KL(p_k\|p^*)\leq \left(1-\frac{m^3}{4}\right)^{k-1}\,KL(p_0\|p^*).
$
\label{main-theorem}
\end{theorem}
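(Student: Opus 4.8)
The plan is to pass from the full path measures to their endpoint couplings, turn each of the two projections into an exact Pythagorean identity for the optimality gap, and then close the argument with a quantitative strong‑convexity estimate in the spirit of \cite{ConvBlockDescent}. To begin, I would record that the reference $q$ is Markov and that its bridge $p^{*}$ is simultaneously Markov ($p^{*}\in\mathcal M$) and $q$‑reciprocal, i.e. $p^{*}=q(x_{t_1:t_N}\mid x_0,x_1)\,\pi^{*}(x_0,x_1)$ with $\pi^{*}=p^{*}(x_0,x_1)$. A one‑line induction shows that the Markovian projection preserves every single–time marginal, so each $p_{k+\frac12}$ and each $p_k$ again lies in $\Pi_{N+2}(\mu,\nu)$ and the couplings $\pi_k:=p_k(x_0,x_1)$ stay in $\Pi(\mu,\nu)$. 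Since $p_k$ and $p^{*}$ share the bridge factor $q(x_{t_1:t_N}\mid x_0,x_1)$, that factor cancels in the divergence and
\[
KL(p_k\|p^{*})=KL(\pi_k\|\pi^{*}),
\]
so it suffices to control the coupling gap.

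Next I would convert each half–step into a Pythagorean identity. The Markovian projection is the $m$‑projection of $p_k$ onto the log–linear (e‑flat) family $\mathcal M$; as it matches all consecutive–pair marginals of $p_k$ and $p^{*}\in\mathcal M$, this gives $KL(p_k\|p^{*})=KL(p_k\|p_{k+\frac12})+KL(p_{k+\frac12}\|p^{*})$. The reciprocal projection only edits the endpoint coupling while keeping the bridge fixed, so the bridge again cancels and the analogous identity $KL(p_{k+\frac12}\|p^{*})=KL(p_{k+\frac12}\|p_{k+1})+KL(p_{k+1}\|p^{*})$ holds. Adding the two yields the exact decrement
\[
KL(p_k\|p^{*})-KL(p_{k+1}\|p^{*})=KL(p_k\|p_{k+\frac12})+KL(p_{k+\frac12}\|p_{k+1})\ge 0,
\]
which already reproves monotone convergence; the entire task is now to bound the right–hand side below by a fixed multiple of $KL(p_k\|p^{*})$.

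The crux is therefore the \L{}ojasiewicz/PL–type inequality
\[
KL(p_k\|p_{k+\frac12})+KL(p_{k+\frac12}\|p_{k+1})\ \ge\ \tfrac{m^{3}}{4}\,KL(p_k\|p^{*}).
\]
Here the positivity floor $m$ is the decisive resource. Writing $q_{01}(x_0,x_1):=q(x_0,x_1)$ and using the same cancellation, the value of $\pi\mapsto KL(\pi\|q_{01})$ over $\Pi(\mu,\nu)$ equals $KL(p\|p^{*})$ up to the additive constant $KL(\pi^{*}\|q_{01})$, so the gap we must decrease is the sub‑optimality of a single relative–entropy objective. Because all densities in play are bounded below by a quantity comparable to $m$, this objective is strongly convex with a Lipschitz gradient, with moduli $\sigma$ and $L$ that I can make explicit through $m$: the Hessian $\diag(1/\pi)$ is pinched by the density floor, and Pinsker's inequality converts $\ell_1$ distances into divergences. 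Matching the decrement above to the progress guaranteed by the strongly–convex, Lipschitz–gradient analysis of \cite{ConvBlockDescent} produces a contraction factor of the form $1-\sigma/L$, and bookkeeping of the constants yields exactly $1-\tfrac{m^{3}}4$. Combining this with the decrement identity gives $KL(p_{k+1}\|p^{*})\le(1-\tfrac{m^{3}}4)\,KL(p_k\|p^{*})$, and iterating from $k=1$ produces the claimed bound.

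The main obstacle I expect is this third step. The sum of the two one‑sided divergences is not the Bregman distance of a single convex function, so relating it to the gap requires controlling the ``angle'' between the Markov and reciprocal manifolds at $p^{*}$ and converting repeatedly between KL and squared $\ell_1$/$\ell_2$ distances. Each such conversion costs a factor governed by the density floor, and the delicate point is to organize these losses so that no more than the three advertised powers of $m$ (together with the constant $\tfrac14$) are spent, rather than an exponentially worse dependence on $N$ or $|\mathcal X|$.
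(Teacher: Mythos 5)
Your first two steps are sound and even attractive: the Pythagorean identity holds for both projections (the Markov projection matches all consecutive-pair marginals against the log-linear Markov family containing $p^*$, and the reciprocal projection shares the bridge factor with $p^*$), so the exact decrement
$KL(p_k\|p^*)-KL(p_{k+1}\|p^*)=KL(p_k\|p_{k+\frac12})+KL(p_{k+\frac12}\|p_{k+1})$
is correct and recovers monotonicity. This is a slightly different framing from the paper, which instead shows (Lemma~\ref{EquivMarkov}) that both half-steps are minimizations of the \emph{same} function $f=KL(\cdot\|p^*)$ over affine slices $A(p_k)$ and $B(p_{k+\frac12})$, casting IMF as alternating minimization. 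However, your third step --- the PL-type inequality $KL(p_k\|p_{k+\frac12})+KL(p_{k+\frac12}\|p_{k+1})\ge\frac{m^3}{4}KL(p_k\|p^*)$ --- is precisely where the entire difficulty of the theorem lives, and your proposal does not prove it: it asserts that the analysis of \cite{ConvBlockDescent} plus ``bookkeeping of the constants yields exactly $1-\frac{m^3}{4}$,'' and your closing paragraph concedes that you do not know how to control the angle between the two manifolds. A contraction factor cannot be obtained from strong convexity and gradient Lipschitzness of a single objective alone; one needs a quantitative statement that the two constraint families together span the full feasible directions. The paper supplies exactly this missing ingredient as Lemma~\ref{lemma-orthogonal}: for every $\xi\in L_C$, $\|\Pr_{L_A}(\xi)\|^2+\|\Pr_{L_B}(\xi)\|^2\ge\|\xi\|^2$, i.e.\ the subspaces $L_A$ (Markov-projection directions) and $L_B$ (reciprocal-projection directions) jointly dominate $L_C$. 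The factor $m^3$ then arises from a specific chain --- one power of $m$ from the sufficient-decrease bound $f(p)-f(w^*)\ge\frac m2\|\nabla_L f(p)\|^2$ applied to each half-step, and two more from transferring $\nabla_{L_B}f(p_{k+\frac12})$ back to $\nabla_{L_B}f(p_k)$ via the $(1/m)$-Lipschitz gradient --- not from a generic $1-\sigma/L$ ratio.

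A second, independent flaw: your proposed reduction of the gap to ``the sub-optimality of a single relative-entropy objective'' $\pi\mapsto KL(\pi\|q_{01})$ over $\Pi(\mu,\nu)$ cannot work as stated. While $KL(p_k\|p^*)=KL(\pi_k\|\pi^*)$ does hold for integer $k$ (both measures are reciprocal), the half-step measure $p_{k+\frac12}$ is Markov but generally \emph{not} reciprocal, so the bridge factor does not cancel there, and the Markov projection is not expressible as any minimization over endpoint couplings. The evolution $\pi_k\mapsto\pi_{k+1}$ is the composition of Markov transition kernels of $p_{k+\frac12}$, which has no variational characterization in coupling space; this is exactly why the paper carries out the whole argument in path space, where it needs the density floor $m=\varepsilon_q^{N+2}\varepsilon_\mu\varepsilon_\nu$ on full trajectories (its Lemma~2), not merely on couplings. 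To repair your argument you would need to prove an analogue of the subspace-orthogonality lemma and redo the decrement-to-gap comparison in path space --- at which point you would essentially have reconstructed the paper's proof.
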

\vspace{-3.5mm}\begin{lemma}\label{EquivMarkov}
    The Markov projection satisfies $p_{k+\frac{1}{2}}= \arg\min_{p\in A(p_k)} KL(p\|p^*)$, where
    $
    A(\eta)\!\!:= \bigl\{p\in \mathcal{P}(\mathcal{X}^{N+2}): p(x_{t_n},x_{t_{n+1}})=\eta(x_{t_n},x_{t_{n+1}}),\; \forall n\in\{0,\dots,N\}, \forall x_0,x_1\in\mathcal{X}\bigr\}.
    $
    In turn, the reciprocal projection satisfies $p_{k+1}= \arg\min_{p\in B(p_{k+1/2})} KL(p\|p^*)$, where 
    $
    B(\eta):=\{p\in \mathcal{P}(\mathcal{X}^{N+2}): p(x_0,x_1)=\eta(x_0,x_1), \forall x_0,x_1\in\mathcal{X}\}.
    $
\end{lemma}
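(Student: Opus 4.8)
The plan is to recognize both claims as \emph{information projections} (I-projections) onto linear families and to establish each through a Pythagorean-type decomposition of $KL(\cdot\,\|\,p^*)$ that exploits the factorized form of the target $p^*$. Two structural properties of the Schr\"odinger bridge will be used: (i) $p^*$ is \text{Markov}, i.e.\ $p^*(x_{0:1})=p^*(x_0)\prod_{n=0}^N p^*(x_{t_{n+1}}|x_{t_n})$; and (ii) $p^*$ is $q$-\text{reciprocal}, i.e.\ $p^*(x_{t_1:t_N}|x_0,x_1)=q(x_{t_1:t_N}|x_0,x_1)$. Both are standard consequences of $p^*$ solving (\ref{Schrodinger}) together with $q$ being Markov; property (ii) in particular follows because any minimizer of $KL(p\|q)$ over $\Pi_{N+2}(\mu,\nu)$ can only decrease its objective by replacing its interior conditional with $q(x_{t_1:t_N}|x_0,x_1)$, an operation that leaves the endpoint marginal untouched.

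For the Markov projection I would first verify the membership $p_{k+\frac12}\in A(p_k)$. A short induction shows that the single-time marginals of the Markovianization coincide with those of $p_k$ (they agree at $t_0$ by construction, and propagating through the kernels $p_k(x_{t_{n+1}}|x_{t_n})$ preserves the agreement), whence $p_{k+\frac12}(x_{t_n},x_{t_{n+1}})=p_k(x_{t_{n+1}}|x_{t_n})\,p_k(x_{t_n})=p_k(x_{t_n},x_{t_{n+1}})$. Next, using (i) and the factorized form of $p_{k+\frac12}$, the function $\ln\bigl(p_{k+\frac12}/p^*\bigr)$ is a sum of terms each depending on a single $x_0$ or on one consecutive pair $(x_{t_n},x_{t_{n+1}})$; since every $p\in A(p_k)$ shares these marginals, its $p$-expectation is constant, which yields the Pythagorean identity
\[
KL(p\,\|\,p^*)=KL\bigl(p\,\|\,p_{k+\frac12}\bigr)+KL\bigl(p_{k+\frac12}\,\|\,p^*\bigr),\qquad \forall p\in A(p_k).
\]
As the last term is independent of $p$ and the first is nonnegative, vanishing exactly at $p=p_{k+\frac12}$, the Markovianization is the unique minimizer.

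For the reciprocal projection the chain rule for KL gives the exact decomposition
\[
KL(p\,\|\,p^*)=KL\bigl(p(x_0,x_1)\,\|\,p^*(x_0,x_1)\bigr)+\!\!\sum_{x_0,x_1}\! p(x_0,x_1)\,KL\bigl(p(x_{t_1:t_N}|x_0,x_1)\,\|\,p^*(x_{t_1:t_N}|x_0,x_1)\bigr).
\]
On $B(p_{k+\frac12})$ the endpoint marginal is frozen to $p_{k+\frac12}(x_0,x_1)$, so the first term is constant; by (ii) the inner divergences become $KL\bigl(p(\cdot|x_0,x_1)\,\|\,q(\cdot|x_0,x_1)\bigr)\ge0$, minimized to zero exactly when $p(x_{t_1:t_N}|x_0,x_1)=q(x_{t_1:t_N}|x_0,x_1)$. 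Hence the unique minimizer is $q(x_{t_1:t_N}|x_0,x_1)\,p_{k+\frac12}(x_0,x_1)=p_{k+1}$, as claimed.

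I expect the main obstacle to lie not in the projection algebra but in rigorously securing the two factorization properties of $p^*$ and in verifying that the cross terms genuinely collapse to constants on the constraint sets. This hinges on the observation that $\ln(c/p^*)$, for the relevant candidate $c\in\{p_{k+\frac12},p_{k+1}\}$, decomposes into a sum of functions of \emph{exactly} the coordinates pinned down by $A$ or $B$, and on the positivity assumptions ($q,\mu,\nu>0$) guaranteeing that all conditionals are well defined and that summands with $p(x_0,x_1)=0$ can be discarded. Confirming the marginal-consistency that makes the Markovianization a bona fide element of $A(p_k)$ is the one place where some care is required.
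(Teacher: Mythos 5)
Your proof is correct and takes essentially the same route as the paper: the same Pythagorean decomposition of $KL(\cdot\|p^*)$ using the Markov property of $p^*$ for the first claim, and the same chain-rule decomposition over the endpoint marginal $p(x_0,x_1)$ combined with $p^*(x_{t_1:t_N}|x_0,x_1)=q(x_{t_1:t_N}|x_0,x_1)$ for the second. The only differences are that you explicitly verify the membership $p_{k+\frac{1}{2}}\in A(p_k)$ and sketch a justification of the reciprocal property of $p^*$, both of which the paper handles implicitly or by citing \cite[Theorem~3.1]{CSBM}.
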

\begin{proof}
We recall from \cite[Theorem~3.1]{CSBM} that the solution $p^*$ satisfies the Markov property.
We obtain
$
    KL(p\|p^*)=\sum p(x_{0:1})\ln \frac{p(x_{0:1})}{p_{k+\frac{1}{2}}(x_{0:1})} 
    - \sum p(x_{0:1})\ln\frac{p_{k+\frac{1}{2}}(x_{0:1})}{p^*(x_{0:1})}
    =KL(p\|p_{k+\frac{1}{2}})-\sum \Bigl(\sum_{n=0}^N p_k(x_{t_{n}},x_{t_{n+1}})\ln\frac{p_k(x_{t_{n+1}}|x_{t_{n}})}{p^*(x_{t_{n+1}}|x_{t_{n}})}+p(x_{0})\ln\frac{p_k(x_0)}{p^*(x_0)}\Bigr),
$
where the latter term does not depend on $p$, and thus $p_{k+\frac{1}{2}}$ indeed minimizes. For the reciprocal projection, by analogy with \citep[Formula (4)]{CSBM}, we decompose $KL$ over the marginal $p(x_0,x_1)$ and reduce the problem to minimizing $KL\bigl(p(x_{t_1:t_N}|x_0,x_1)\big\|q(x_{t_1:t_N}|x_0,x_1)\bigr)$ for all $(x_0,x_1)$, since $p^*=q(x_{t_1:t_N}|x_0, x_1)p^*(x_0,x_1)$  \cite[Theorem~3.1]{CSBM}.
Because there are no constraints on the conditional distributions $p(x_{t_1:t_N}|x_0,x_1)$, the solution is $q(x_{t_1:t_N}|x_0,x_1)$.
\end{proof}
   

\begin{lemma}
    There exists a convex set in $\Pi_{N+2}(\mu,\nu)$ containing $p^*$ and the sequence $\{p_{\frac{k}{2}}\}_{k=1}^\infty$. Moreover, on this domain $KL(\cdot\|p^*)$ is 1-strongly convex, and its gradient is $(1/m)$-Lipschitz for $1\geq m>0$ which defined before Theorem \ref{main-theorem}.
\end{lemma}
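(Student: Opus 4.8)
The plan is to take the set explicitly as the slice
$$S:=\bigl\{p\in\Pi_{N+2}(\mu,\nu):\ p(\omega)\geq m\ \text{for all}\ \omega\in\mathcal{X}^{N+2}\bigr\}$$
and to verify the three required properties on it. Its convexity is immediate: $\Pi_{N+2}(\mu,\nu)$ is defined by the linear marginal constraints, while each inequality $p(\omega)\geq m$ cuts out a half-space, so $S$ is an intersection of convex sets. The lemma then reduces to two tasks: (i) showing that every iterate $p_{k/2}$ with $k\geq1$, together with $p^*$, has all coordinates at least $m$ and hence lies in $S$; and (ii) checking that $I\preceq\nabla^2 KL(\cdot\|p^*)\preceq\tfrac1m I$ on $S$.

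Task (ii) is routine. Writing $f(p)=\sum_\omega p(\omega)\ln\frac{p(\omega)}{p^*(\omega)}$ and differentiating twice gives the diagonal Hessian $\nabla^2 f(p)=\diag\bigl(1/p(\omega)\bigr)$. On $S$ one has $m\leq p(\omega)\leq1$ for every $\omega$, so $1\leq 1/p(\omega)\leq 1/m$ and therefore $I\preceq\nabla^2 f(p)\preceq\tfrac1m I$ (note $m\leq1$, so $1/m\geq1$, consistent with the claimed constants). Since $S$ is convex, integrating this two-sided Hessian bound along the segment between any two points of $S$ yields simultaneously the $1$-strong convexity of $f$ and the $\tfrac1m$-Lipschitz continuity of $\nabla f$ on $S$; the affine marginal and normalization constraints only restrict the admissible increments $p-p'$, which leaves the estimate intact.

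Task (i) is the crux, and the decisive point is to secure a lower bound that does not decay with $k$. By Lemma~\ref{EquivMarkov} and \cite[Theorem~3.1]{CSBM}, every integer iterate $p_k$ (including $p_0$) and the optimum $p^*$ are reciprocal, of the form $p=q(x_{t_1:t_N}|x_0,x_1)\,p(x_0,x_1)$, and belong to $\Pi_{N+2}(\mu,\nu)$. For such $p$ I would bound the Markov transitions directly through this representation. Every marginal of the bridge obeys $q(x_{t_n},x_{t_{n+1}}|x_0,x_1)\geq\varepsilon_q$ (a sum of full-bridge values, each at least $\varepsilon_q$), whence $p(x_{t_n},x_{t_{n+1}})=\sum_{x_0,x_1}p(x_0,x_1)\,q(x_{t_n},x_{t_{n+1}}|x_0,x_1)\geq\varepsilon_q$ for interior pairs, giving $p(x_{t_{n+1}}|x_{t_n})\geq\varepsilon_q$ after dividing by $p(x_{t_n})\leq1$; the first transition gives $p(x_{t_1}|x_0)\geq\varepsilon_q$ as well (the factor $\mu(x_0)$ cancels against the denominator $p(x_0)=\mu(x_0)$), while the last transition keeps an extra marginal factor, $p(x_1|x_{t_N})\geq\varepsilon_q\,\nu(x_1)\geq\varepsilon_q\varepsilon_\nu$. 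These bounds are uniform in $k$. Feeding them into the Markov projection yields $p_{k+1/2}(\omega)=p_k(x_0)\prod_{n=0}^N p_k(x_{t_{n+1}}|x_{t_n})\geq\varepsilon_\mu\,\varepsilon_\nu\,\varepsilon_q^{N+1}$, and the subsequent reciprocal projection yields $p_{k+1}(\omega)=q(x_{t_1:t_N}|x_0,x_1)\,p_{k+1/2}(x_0,x_1)\geq\varepsilon_q\cdot\varepsilon_\mu\varepsilon_\nu\varepsilon_q^{N+1}=m$, using $p_{k+1/2}(x_0,x_1)\geq\varepsilon_\mu\varepsilon_\nu\varepsilon_q^{N+1}$ (a marginal dominates one summand). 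Since $\varepsilon_q\leq1$, the half-integer bound $\varepsilon_\mu\varepsilon_\nu\varepsilon_q^{N+1}$ is itself $\geq m$, and $p^*$, being simultaneously Markov and reciprocal, satisfies the same Markov-projection estimate; thus $\{p_{k/2}\}_{k\geq1}\cup\{p^*\}\subset S$.

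I expect the only real obstacle to be this uniformity. The naive estimate, which bounds a transition by $p(x_{t_n},x_{t_{n+1}})$ and then propagates it through both projections, forces the lower bound to contract geometrically from one iteration to the next and is therefore useless. Routing every transition bound through the reciprocal representation $p=q(\cdot|x_0,x_1)\,p(x_0,x_1)$, so that each per-step factor is the fixed constant $\varepsilon_q$ rather than an iterate-dependent quantity, is exactly what makes the constant $m=\varepsilon_q^{N+2}\varepsilon_\mu\varepsilon_\nu$ work for the entire sequence and closes the argument.
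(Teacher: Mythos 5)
Your proposal is correct and follows essentially the same route as the paper: you exploit the reciprocal representation $p=q(x_{t_1:t_N}|x_0,x_1)\,p(x_0,x_1)$ to get $k$-uniform lower bounds on the pairwise marginals/transitions (yielding $\varepsilon_q^{N+1}\varepsilon_\mu\varepsilon_\nu$ for half-integer iterates and $m=\varepsilon_q^{N+2}\varepsilon_\mu\varepsilon_\nu$ for integer iterates and $p^*$), and then conclude via the diagonal Hessian $\diag(1/p)$ bounded between $I$ and $(1/m)I$ on the convex slice $\{p\in\Pi_{N+2}(\mu,\nu):p(x_{0:1})\geq m\}$. The only differences are presentational (you bound conditionals where the paper bounds pairwise marginals, and you spell out the Hessian-to-convexity step slightly more explicitly), not mathematical.
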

\vspace{-4mm}\begin{proof}
    Let $\eta=q(x_{t_{1}:t_{N}}|x_0, x_1)\eta(x_0,x_1)$, where $\eta(x_0,x_1)\in\Pi(\mu,\nu)$. Define $\eta_{1/2}(x_{0:1}) \!:=\! \eta(x_0)\prod_{n=0}^N \eta(x_{t_{n+1}}|x_{t_{n}})
    \geq
    \prod_{n=0}^N \eta(x_{t_{n}},x_{t_{n+1}}),
    $
    where the inequality follows from $\eta(x_{t_{n}})\!\leq\! 1$ and Bayes' formula.
    Next,
    $\eta(x_{t_{n}},x_{t_{n+1}})\!=\!\sum_{x_0,x_1}\!q(x_{t_{n}}\!,x_{t_{n+1}}|x_0,x_1)\eta(x_0,x_1)$, which is no less than $\sum_{x_0,x_1}q(x_{t_{1}:t_{N}}|x_0,x_1)\eta(x_0,x_1)\!\geq\!\varepsilon_q\sum_{x_0,x_1}\eta(x_0,x_1)=\varepsilon_q
    $ for $n\!\notin\!\{0,N\}$. Similarly, for $\eta(x_{{0}},x_{t_{1}})$ and $\eta(x_{t_{N}},x_{1})$ we obtain lower bounds $\varepsilon_q\varepsilon_\mu$ and $\varepsilon_q\varepsilon_\nu$ respectively.
    We obtain $\eta_{1/2}(x_{0:1})\geq \varepsilon_q^{N+1}\varepsilon_\mu\varepsilon_\nu=:\delta$, 
    hence $\eta_{1/2}(x_0,x_1)\geq\delta$.
    Let $\eta_1:=q(x_{t_{1}:t_{N}}|x_0, x_1)\eta_{1/2}(x_0,x_1)$, then $\eta_1(x_{0:1})\geq m:=\varepsilon_q \delta$.
    Each element of the sequence  $\{p_{k/2}\}_{k=1}^\infty$, as well as $p^*$, can be represented as $\eta_{1/2}$ or $\eta_{1}$ for some $\eta$. Consequently, $\{p_{k/2}\}_{k=1}^\infty$ and $p^*$ are contained in $\{p\in\Pi_{N+2}(\mu,\nu): p(x_{0:1})\geq m\}$.
    Consider $KL(p\|p^*)$ as a function of $p\in\mathbb{R}_+^{|\mathcal{X}|^{N+2}}$, its Hessian is diagonal with entries $1/p$, yielding the norm bound
    $
    1\leq\|\diag(1/p)\|\leq 1/m.
    $
\end{proof}

\vspace{-2mm}\noindent Let $\xi(x_{0:1})$ be the density of a finite signed measure on $|\mathcal{X}|^{N+2}$. Denote by $-(t_i,t_j)$ the set of indices ${t_0}\!:\!t_{N+1}$ without $t_i$, $t_j$. Define
$
\xi(x_{t_i},x_{t_j}):=\sum_{x_{-(t_i,t_j)}}\xi(x_{0:1}).
$
Set
$
L_A:=\{\xi:\xi(x_{t_{n+1}},x_{t_{n}})\equiv 0,\;\forall n\in\{0,\dots,N\} \},
$
$
L_B:=\{\xi:\xi(x_0,x_{1})\equiv 0\},
$
$
L_C:=\{\xi: \xi(x_0)\equiv 0,\; \xi(x_1)\equiv 0\}
$
— subspaces of signed measures. Note that $A(\eta)=(\eta+L_A)\cap\mathcal{P}(\mathcal{X}^{N+2})$ and $B(\eta)=(\eta+L_B)\cap\mathcal{P}(\mathcal{X}^{N+2})$, where $\eta\in\Pi_{N+2}(\mu,\nu)$.
\vspace{-1mm}\begin{lemma}\label{lemma-orthogonal}
     $\forall \xi\in L_C$: $\|\Pr_{L_A} (\xi)\|^2\!\!+\!\!\|\Pr_{L_B}(\xi)\|^2\!\!\ge\!\!\|\xi\|^2$, where $\Pr$ is the orthogonal projection.
\end{lemma}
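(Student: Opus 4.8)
\noindent The plan is to pass to the dual (test-function) description of the three subspaces and thereby turn the inequality into the statement that two subspaces sitting inside $L_C$ are mutually orthogonal. First I would equip $\mathbb{R}^{|\mathcal{X}|^{N+2}}$ with the Euclidean inner product $\langle\xi,\zeta\rangle=\sum_{x_{0:1}}\xi(x_{0:1})\zeta(x_{0:1})$. For a pair of times let $V_{(i,j)}$ be the vectors depending only on $(x_{t_i},x_{t_j})$; since $\langle\xi,f\rangle=\sum_{x_{t_i},x_{t_j}} f(x_{t_i},x_{t_j})\,\xi(x_{t_i},x_{t_j})$ for $f\in V_{(i,j)}$, the constraint $\xi(x_{t_i},x_{t_j})\equiv0$ is exactly $\xi\perp V_{(i,j)}$. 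Writing $\mathcal{F}_A:=\sum_{n=0}^N V_{(n,n+1)}$ and $V_{\mathrm e}:=V_{(0,N+1)}$, I obtain $L_A=\mathcal{F}_A^{\perp}$, $L_B=V_{\mathrm e}^{\perp}$, and $L_C=(U_0+U_1)^{\perp}$ with $U_i$ the single-endpoint functions. Summing the defining marginal constraints shows $L_A\subseteq L_C$ and $L_B\subseteq L_C$, which is what lets me work inside $H:=L_C$.

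\noindent Next I would set up the reduction inside $H$. Let $M_A$ and $M_B$ be the orthogonal complements of $L_A$ and $L_B$ \emph{taken within} $H$, so that $H=L_A\oplus M_A=L_B\oplus M_B$. Pythagoras gives $\|\Pr_{L_A}\xi\|^2=\|\xi\|^2-\|\Pr_{M_A}\xi\|^2$ for $\xi\in H$, and similarly for $B$; substituting, the assertion becomes equivalent to $\|\Pr_{M_A}\xi\|^2+\|\Pr_{M_B}\xi\|^2\le\|\xi\|^2$ for all $\xi\in H$. Now if $M_A\perp M_B$, then $\Pr_{M_A}\xi+\Pr_{M_B}\xi=\Pr_{M_A\oplus M_B}\xi$ and the left-hand side equals $\|\Pr_{M_A\oplus M_B}\xi\|^2\le\|\xi\|^2$. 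So it suffices to prove $M_A\perp M_B$.

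\noindent The substantive step is this orthogonality. Decomposing $h\in H$ in the full space as $h=\Pr_{L_A}h+\Pr_{\mathcal{F}_A}h$ and noting $\Pr_{L_A}h\in L_A\subseteq H$ gives $\Pr_{\mathcal{F}_A}h\in H$, so $M_A=H\cap\mathcal{F}_A\subseteq\mathcal{F}_A$ and likewise $M_B=H\cap V_{\mathrm e}$. Hence it is enough to check $\langle f,g\rangle=0$ for every $f\in\mathcal{F}_A$ and every $g\in L_C\cap V_{\mathrm e}$. Write $f=\sum_{n=0}^N f_n(x_{t_n},x_{t_{n+1}})$; membership $g\in L_C$ means precisely $\sum_{x_1}g(x_0,x_1)\equiv0$ and $\sum_{x_0}g(x_0,x_1)\equiv0$. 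For an interior index $1\le n\le N-1$ the variables of $f_n$ are disjoint from those of $g$, so $\langle f_n,g\rangle$ factorizes and carries the vanishing factor $\sum_{x_0,x_1}g=0$. The two boundary terms are the only delicate ones: for $n=0$ the pair $(x_0,x_{t_1})$ shares $x_0$ with $g$, and $\langle f_0,g\rangle$ collapses to $\sum_{x_0}\bigl(\sum_{x_{t_1}}f_0\bigr)\bigl(\sum_{x_1}g(x_0,x_1)\bigr)=0$; symmetrically for $n=N$ using $\sum_{x_0}g(x_0,x_1)\equiv0$. Summing yields $\langle f,g\rangle=0$, hence $M_A\perp M_B$.

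\noindent I expect the main obstacle to be exactly this last step, and within it the two boundary terms. The point worth flagging is that the inequality is \emph{false} without the hypothesis $\xi\in L_C$: globally $\mathcal{F}_A$ and $V_{\mathrm e}$ both contain the constants and so are not orthogonal. Intersecting with $L_C$ removes the single-endpoint (in particular constant) components of $g$, and it is precisely the two centering identities $\sum_{x_1}g(x_0,\cdot)\equiv0$ and $\sum_{x_0}g(\cdot,x_1)\equiv0$ that make the boundary computation close; this also quietly uses $N\ge1$, so that each boundary pair shares only one endpoint with $g$. The remaining work is routine bookkeeping of which shared coordinate and which centering identity apply in each boundary term.
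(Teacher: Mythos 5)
Your proof is correct and takes essentially the same route as the paper: both arguments work inside $L_C$, identify the orthogonal complement of $L_B$ there as the centered endpoint-dependent measures, and dispose of their interaction with $L_A$ via exactly the two centering identities $\sum_{x_1}g(x_0,x_1)\equiv 0$ and $\sum_{x_0}g(x_0,x_1)\equiv 0$ (your inner-product computation $\langle f_n,g\rangle=0$ is the dual form of the paper's computation that the consecutive-pair marginals of such measures vanish). The only difference is in the final packaging: the paper states the key fact as the containment $L_B^{\perp}\cap L_C\subseteq L_A$ and concludes by monotonicity of projections onto nested subspaces, while you state it as the orthogonality $M_A\perp M_B$ (equivalent, since $M_A\subseteq L_A^{\perp}$) and conclude by summing projections onto orthogonal subspaces.
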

\vspace{-3.5mm}\begin{proof} 
    Note that $L_B\subset L_C$, since $\xi(x_0, x_1)\equiv0\Rightarrow\sum_{x_0}\xi(x_0, x_1)\equiv0$. Consider
    $
    L_B^\perp=\{\xi\in L_C: \exists h=h_{\xi}:\mathbb{R}^{\mathcal{X}^2}\to\mathbb{R}\;\text{such that}\;\forall x_{0:1}: \xi(x_{0:1})\equiv h_\xi(x_0,x_1)\},
    $
    i.e.\ the orthogonal complement of $L_B$ in $L_C$. For every $\xi\in L_B^\perp$ we have $\xi\in L_A$, since
    $
    0\equiv\xi(x_0)\equiv\sum_{x_{1}}|\mathcal{X}|^{N}h_\xi(x_0,x_1)
    \;\Rightarrow\;
    0\equiv\sum_{x_{0},x_1}|\mathcal{X}|^{N-2}h_\xi(x_0,x_1)\equiv\xi(x_{t_n},x_{t_{n+1}}).
    $
    Similarly for $n\in\{0,N\}$ use $\xi(x_0)\equiv0$ and $\xi(x_1)\equiv0$. Finally,
    $
    \|\xi\|^2-\|\Pr_{L_B}(\xi)\|^2=\|\Pr_{L_B^\perp}(\xi)\|^2\leq\|\Pr_{L_A}(\xi)\|^2
    \quad\forall \xi \in L_C.
    $
\end{proof}

\vspace{-2mm}\noindent Let $w^*$ be the minimizer of a 1-strongly convex function $f$ with $(1/m)$-Lipschitz gradient on $(p+L)\cap\mathcal{P}(\mathcal{X}^{N+2})$. Then these properties imply
$
\|\nabla_L f(p)\|\geq\|p-w^*\|,
$
$
f(p)-f(w^*)\geq\frac{m}{2}\|\nabla_L f(p)\|^2,
$
where $\nabla_L$ is the projection of the gradient onto $L$.
\vspace{-1.5mm}\begin{lemma}
    $f(p_k)-f(p_{k+1})\geq\frac{m^3}{4}\|\nabla_{L_C} f(p_k)\|^2$, where $f(p):=KL(p\|p^*)$.
\end{lemma}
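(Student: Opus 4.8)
The plan is to split one full IMF step into its two projections and apply the stated sufficient-decrease inequality $f(p)-f(w^*)\ge\frac m2\|\nabla_L f(p)\|^2$ to each. By Lemma~\ref{EquivMarkov}, $p_{k+1/2}=\argmin_{p\in A(p_k)}f(p)$ and $p_{k+1}=\argmin_{p\in B(p_{k+1/2})}f(p)$, with $A(p_k)=(p_k+L_A)\cap\mathcal P(\mathcal X^{N+2})$ and $B(p_{k+1/2})=(p_{k+1/2}+L_B)\cap\mathcal P(\mathcal X^{N+2})$. Applying the inequality with $(L,p,w^*)=(L_A,p_k,p_{k+1/2})$ and with $(L_B,p_{k+1/2},p_{k+1})$ and summing the two estimates gives
\[
f(p_k)-f(p_{k+1})\ge\tfrac m2\|\nabla_{L_A}f(p_k)\|^2+\tfrac m2\|\nabla_{L_B}f(p_{k+1/2})\|^2 .
\]

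The heart of the argument is to bound the second term from below by $\|\nabla_{L_C}f(p_k)\|$. First I would record the first-order optimality of the Markov projection: since $p_{k+1/2}$ minimizes $f$ over the affine family $p_k+L_A$ and has density bounded below by $m>0$, it is interior to the simplex, so the only active constraints are the linear equalities cutting out $L_A$, whence $\nabla_{L_A}f(p_{k+1/2})=0$. Because $L_A\subseteq L_C$ (vanishing adjacent-pair marginals force vanishing endpoint marginals) and $L_B\subseteq L_C$, projections compose: $\Pr_{L_A}\nabla_{L_C}f=\nabla_{L_A}f$ and $\Pr_{L_B}\nabla_{L_C}f=\nabla_{L_B}f$. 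Applying Lemma~\ref{lemma-orthogonal} to $\xi=\nabla_{L_C}f(p_{k+1/2})\in L_C$ and discarding the term $\|\Pr_{L_A}\xi\|^2=\|\nabla_{L_A}f(p_{k+1/2})\|^2=0$ yields
\[
\|\nabla_{L_B}f(p_{k+1/2})\|=\|\Pr_{L_B}\nabla_{L_C}f(p_{k+1/2})\|\ge\|\nabla_{L_C}f(p_{k+1/2})\| .
\]

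Next I would transport the $L_C$-gradient from $p_{k+1/2}$ back to $p_k$. As the orthogonal projection onto $L_C$ is nonexpansive and $\nabla f$ is $(1/m)$-Lipschitz on the domain where the densities exceed $m$, the reverse triangle inequality gives $\|\nabla_{L_C}f(p_{k+1/2})\|\ge\|\nabla_{L_C}f(p_k)\|-\frac1m\|p_{k+1/2}-p_k\|$, while the stated estimate $\|\nabla_L f(p)\|\ge\|p-w^*\|$ for the Markov step gives $\|p_{k+1/2}-p_k\|\le\|\nabla_{L_A}f(p_k)\|$. Writing $a:=\|\nabla_{L_A}f(p_k)\|$ and $s:=\|\nabla_{L_C}f(p_k)\|$, the bound of the first paragraph becomes
\[
f(p_k)-f(p_{k+1})\ge\tfrac m2\,a^2+\tfrac m2\big((s-a/m)_+\big)^2 ,
\]
where $(\cdot)_+$ denotes the positive part.

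Finally I would minimize the right-hand side over $a\ge0$, a one-variable exercise: the minimizer is $a=ms/(1+m^2)$ and the minimal value is $\frac{m^3}{2(1+m^2)}s^2$; since $0<m\le1$ we have $1+m^2\le2$, so the value is at least $\frac{m^3}{4}s^2$, which is the claim. I expect the decisive step to be the second paragraph. The naive move — comparing $\|\nabla_{L_B}f(p_{k+1/2})\|$ with the $L_B$-component $\|\Pr_{L_B}\nabla_{L_C}f(p_k)\|$ at the base point by Lipschitz continuity — loses too much precisely when $\|\Pr_{L_A}\nabla_{L_C}f(p_k)\|$ and $\|\Pr_{L_B}\nabla_{L_C}f(p_k)\|$ are comparable, and fails to reach the constant $m^3/4$. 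Routing instead through the \emph{full} $L_C$-gradient at $p_{k+1/2}$, via Lemma~\ref{lemma-orthogonal} together with the vanishing $L_A$-component, is what makes the constant come out.
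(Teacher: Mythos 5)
Your proof is correct, and it takes a genuinely different route through the core of the argument than the paper does. Both proofs share the same outer skeleton: the two sufficient-decrease bounds $f(p_k)-f(p_{k+1/2})\ge\frac m2\|\nabla_{L_A}f(p_k)\|^2$ and $f(p_{k+1/2})-f(p_{k+1})\ge\frac m2\|\nabla_{L_B}f(p_{k+1/2})\|^2$, the distance bound $\|p_{k+1/2}-p_k\|\le\|\nabla_{L_A}f(p_k)\|$, the $(1/m)$-Lipschitz gradient, and Lemma~\ref{lemma-orthogonal}. But you deploy them differently. The paper works at the base point $p_k$: it transports $\nabla_{L_B}f$ from $p_{k+1/2}$ back to $p_k$ by Lipschitz continuity, inflates with $\|\alpha\|^2\le2\|\alpha-\beta\|^2+2\|\beta\|^2$, and applies Lemma~\ref{lemma-orthogonal} to $\nabla_{L_C}f(p_k)$. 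You work at the half-step point: you invoke the first-order optimality condition $\nabla_{L_A}f(p_{k+1/2})=0$ --- a fact the paper never uses, and which you justify correctly via interiority (the densities are bounded below by $\delta>0$, and $L_A$-directions preserve total mass) --- so that Lemma~\ref{lemma-orthogonal}, applied to $\xi=\nabla_{L_C}f(p_{k+1/2})\in L_C$ with the $L_A$-component dead, collapses to $\|\nabla_{L_B}f(p_{k+1/2})\|\ge\|\nabla_{L_C}f(p_{k+1/2})\|$; you then transport the full $L_C$-gradient back to $p_k$ and close with a one-variable minimization (your nesting claims $L_A\subseteq L_C$ and $L_B\subseteq L_C$, needed for the projections to compose, are both valid). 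What your route buys is the constant. Your minimization gives $\frac{m^3}{2(1+m^2)}\ge\frac{m^3}{4}$ rigorously for every $m\in(0,1]$. The paper's combination step, by contrast, is its weak link: adding $\|\nabla_{L_A}f(p_k)\|^2$ to both sides of its Young-type inequality produces the coefficient $\frac{2}{m^2}+1$, not $\frac{2}{m^2}$, so the displayed bound $\|\nabla_{L_B}f(p_k)\|^2+\|\nabla_{L_A}f(p_k)\|^2\le\frac{2}{m^2}\bigl(\|\nabla_{L_B}f(p_{k+1/2})\|^2+\|\nabla_{L_A}f(p_k)\|^2\bigr)$ does not follow merely from noting $m<1$; the crude repair $\frac{2}{m^2}+1\le\frac{3}{m^2}$ only yields the weaker constant $\frac{m^3}{6}$ (a finer quadratic analysis can recover $\frac{2}{m^2}$ when $m$ is sufficiently small, but the paper does not carry this out). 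So your argument is not merely an alternative: it is tighter, and it is the one that actually delivers the stated constant $\frac{m^3}{4}$ without reservation.
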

\vspace{-2.5mm}\begin{proof}
    From the above remark $f(p_{k})-f(p_{k+\frac{1}{2}})\geq\frac{m}{2}\|\nabla_{L_A} f(p_k)\|^2,$ and $f(p_{k+\frac{1}{2}})-f(p_{k+1})\geq\frac{m}{2}\|\nabla_{L_B} f(p_{k+\frac{1}{2}})\|^2.$
    From Lipschitz continuity of the gradient
    $
    \|\nabla_{L_B} f(p_{k+\frac{1}{2}})-\nabla_{L_B} f(p_k)\|^2
    \leq\frac{1}{m^2}\|p_{k+\frac{1}{2}}-p_k\|^2
    \leq\frac{1}{m^2}\|\nabla_{L_A}f(p_k)\|^2.
    $
    Using $\|\alpha\|^2\le2\|\alpha-\beta\|^2+2\|\beta\|^2$ gives
    $
    \|\nabla_{L_B} f(p_k)\|^2
    \leq 2\|\nabla_{L_B} f(p_{k+\frac{1}{2}})-\nabla_{L_B}f(p_k)\|^2
    +2\|\nabla_{L_B} f(p_{k+\frac{1}{2}})\|^2.
    $
    Combining these and noting $m<1$,
    $
    \|\nabla_{L_B} f(p_k)\|^2+\|\nabla_{L_A} f(p_k)\|^2
    \leq\frac{2}{m^2}(\|\nabla_{L_B} f(p_{k+\frac{1}{2}})\|^2+\|\nabla_{L_A} f(p_k)\|^2).
    $
    
    Hence
    $
    f(p_k)-f(p_{k+1})
    =[f(p_k)-f(p_{k+\frac{1}{2}})]+[f(p_{k+\frac{1}{2}})-f(p_{k+1})]
    \geq\frac{m}{2}(\|\nabla_{L_A} f(p_k)\|^2+\|\nabla_{L_B} f(p_{k+\frac{1}{2}})\|^2)
    \geq\frac{m^3}{4}(\|\nabla_{L_B} f(p_k)\|^2+\|\nabla_{L_A} f(p_k)\|^2)
    \geq\frac{m^3}{4}\|\nabla_{L_C} f(p_k)\|^2
    $ by Lemma \ref{lemma-orthogonal}.
\end{proof}
\vspace{-3.5mm}\begin{proof}[Proof of Theorem 1]
    Combining strong convexity with the inequality $\|\nabla_{L_C} f(p)\|\geq\|p-p^*\|$ yields
    $
    \|\nabla_{L_C} f(p)\|^2\geq f(p).
    $
    Therefore,
    $
    f(p_k)-f(p_{k+1})
    \geq\frac{m^3}{4}\|\nabla_{L_C} f(p_k)\|^2
    \geq\frac{m^3}{4}f(p_k)
    $
    for all $k\geq 1$. To complete the proof, we use $f(p_{1})\leq f(p_0)$ \cite[Proposition B.2]{gushchin2024adversarial}.
\end{proof}

\footnotesize
\vspace{-4mm}\bibliographystyle{plain}\vspace{-3mm}
\bibliography{references}

\end{document}